\newtheorem{theorem}{Theorem}[]
\newtheorem{lemma}[]{Lemma}
\newtheorem{definition}{Definition}
\newcommand\blfootnote[1]{%
  \begingroup
  \renewcommand\thefootnote{}\footnote{#1}%
  \addtocounter{footnote}{-1}%
  \endgroup
}
\begin{document}
	%
	\title{Accurate Characterization of Dynamic Cell Load in Noise-Limited Random Cellular Networks}
	\author{Gourab Ghatak$^{\dagger}$ $^\ddagger$, Antonio De Domenico$^{\dagger}$, and Marceau Coupechoux$^\ddagger$
 \\ \small{ $^{\dagger}$CEA, LETI, MINATEC, F-38054 Grenoble,
France; $^\ddagger$LTCI, Telecom ParisTech, Universit\'e Paris Saclay, France.}
\\ \small{Email: gourab.ghatak@cea.fr; antonio.de-domenico@cea.fr, and marceau.coupechoux@telecom-paristech.fr}\vspace*{-0.7cm}}
		\maketitle
        \thispagestyle{empty}
	\begin{abstract}
    \blfootnote{The research leading to these
results are jointly funded by the European Commission (EC) H2020 and the Ministry of Internal affairs and Communications (MIC) in Japan under grant agreements N$^o$ 723171 5G MiEdge.}
The analyses of cellular network performance based on stochastic geometry generally ignore the traffic dynamics in the network. This restricts the proper evaluation and dimensioning of the network from the perspective of a mobile operator. To address the effect of dynamic traffic, recently, the mean cell approach has been introduced, which approximates the average network load by the zero cell load. However, this is not a realistic characterization of the network load, since a zero cell is statistically larger than a random cell drawn from the population of cells, i.e., a typical cell. In this paper, we analyze the load of a noise-limited network characterized by high signal to noise ratio (SNR). The noise-limited assumption can be applied to a variety of scenarios, e.g., millimeter wave networks with efficient interference management mechanisms. First, we provide an analytical framework to obtain the cumulative density function of the load of the typical cell. Then, we obtain two approximations of the average load of the typical cell. We show that our study provides a more realistic characterization of the average load of the network as compared to the mean cell approach. Moreover, the prescribed closed-form approximation is more tractable than the mean cell approach.

	\end{abstract}
	\IEEEpeerreviewmaketitle
            \vspace*{-0.3cm}
	\section{Introduction}
      \vspace*{-0.10cm}
    Stochastic geometry has emerged as an important tool for modeling and analyzing large scale wireless cellular networks~\cite{elsawy2013stochastic}, wherein the performance is typically characterized by studying metrics such as \ac{SINR} coverage probability and user throughput. To effectively model the user throughput and to efficiently dimension a cellular network from the operators' perspective, the characterization of the cell load is necessary. 
   The existing literature in stochastic geometry models the cell load by considering the average number of associated full buffer users, uniformly distributed over the cell area, see e.g.,~\cite{elsawy2013stochastic, 7493676}. This is not realistic since it ignores the effect of dynamic traffic on the user distribution: users with low data rate tend to stay longer in the system, and as a result, the user distribution becomes inhomogeneous in space.  
   
   However, studying dynamic traffic using elements of queuing theory in stochastic geometry based analyses is still an open problem~\cite{elsawy2013stochastic}. In this regard, Blaszczyszyn et al.~\cite{blaszczyszyn2014user}, have introduced the mean cell approach 
which avoids extensive simulations by approximating the spatial \ac{SINR} distribution of a cell with the \ac{SINR} distribution of the typical user. Thus, in essence, the mean cell approach characterizes the load of the cell containing the typical user, i.e., the \textit{zero cell}~\cite{chiu2013stochastic}. Although this approach enables modeling the cell load, it may lead to intractable expressions consisting of multiple integrals for evaluation of the \ac{SINR} coverage probability. Moreover, a characterization of the load of the zero cell is not a reliable metric for evaluating the network wide load distribution since the zero cell is statistically larger than a random cell drawn from the population of cells, i.e., a \textit{typical cell}. To understand this intuitively, one can assume a random sample point and select the cell containing the point. By stationarity,
the distribution of this cell coincides with that of the zero cell. Since the sample point tends to fall with greater probability into larger cells, the zero cell tends to be larger than the
typical cell.
In this paper, for the case of noise-limited networks, we provide approximations for the network load by characterizing the load of the typical cell. This provides a more realistic characterization of the network load. This noise-limited assumption can be applied to a variety of contexts. For example, in millimeter wave (mm-wave) networks that utilize directional antennas and advanced interference management mechanisms, the performance tends to be noise-limited. Singh et.al~\cite{5733382} have shown the validity of the noise-limited network assumption in mm-wave mesh networks. Furthermore, this noise-limited scenario enables us to visualize our results in light of the seminal work of Bonald et al.~\cite{bonald2003wireless} who derived the cell load expressions for a single cell with dynamic traffic.

The contribution of this paper is as follows. We obtain a closed-form expression for the \ac{CDF} of the load of the typical cell in a noise-limited network by considering dynamic traffic. We use it to obtain the fraction of stable cells for a given deployment density of small cells. 
Then, we obtain a single integral-based approximation, and a closed-form expression, for the average load of the typical cell. We show that the first approximation models the cell load from a network perspective more accurately than the mean cell approach. Whereas, the closed-form expression provides a faster and more tractable alternative to calculate the network load, since it does not require evaluation of integrals.

 The rest of the paper is organized as follows. In Section \ref{sec:SM} we introduce the single tier network and the associated parameters. In Section \ref{sec:MR}, we present our main results on the \ac{CDF} and the average of the load of the typical cell of the network. In Section \ref{sec:SR}, we present the results on the stable fraction of the network and we show the accuracy of our derived approximations with respect to Monte-Carlo simulations. Finally, the paper concludes in Section \ref{sec:C}.
    \vspace*{-0.2cm}
\section{System Model}
  \vspace*{-0.05cm}
\label{sec:SM}
We consider a single-tier cellular network equipped with advanced interference management algorithms, so that the user performance is noise-limited. 

The positions of the \ac{BS} are modeled as points of a \ac{PPP} $\phi$ with intensity $\lambda$ [m$^{-2}$].
The \ac{BS}s operate with a transmit power $P_t$, and the product of the gains of the antennas at the transmitter and the receiver is $G_0$. We consider a fast fading that is Rayleigh distributed with variance equal to one. Furthermore, we assume a path loss model where the power at the origin received from a \ac{BS} located at a distance $r$ is given by $P_r = K\cdot P_t \cdot h \cdot G_0 \cdot r^{-\alpha}$, where $K$ is the path loss coefficient, $h$ is the exponentially distributed fading power, and $\alpha$ is the path loss exponent. Thus, the average SNR can be written as $\frac{K\cdot P_t \cdot G_0 \cdot r^{-\alpha}}{N_0 \cdot B}$ =  $\xi r^{-\alpha}$, where $\xi = \frac{K\cdot P_t \cdot G_0}{N_0 \cdot B}$ is the average SNR at $1~\text{m}$. $N_0$ and $B$ are the noise power density and the operating bandwidth, respectively.

In this network, we assume that the users arrive in the system, download a file, and leave the system. Any new download by the same user is considered as a new user. The arrival process of the new users is Poisson distributed with an intensity $\lambda_U$ [users $\cdot$s$^{-1}\cdot$m$^{-2}$] and these new users are uniformly distributed over the network area $A$. The average file size is $\sigma$ [bits/user]. When there are $n$ users simultaneously served by a BS, the available resources are equally shared between them in a Round Robin fashion. Accordingly, we define the traffic density $w$ in the network as $w = \lambda_U \cdot \sigma$ [bits$\cdot$s$^{-1}\cdot$m$^{-2}$]. Note that, while the user arrivals are uniform in space, as the space-time process evolves, users farther from the serving BSs, i.e., characterized by lower data rates, stay longer in the system, resulting in an inhomogeneous distribution of active users in the network.
 \vspace*{-0.05cm}
\section{Characterization of the Network Load}
 \vspace*{-0.1cm}
\label{sec:MR}
\subsection{Static vs Dynamic Load}
 \vspace*{-0.1cm}
Before proceeding to our main results, it is necessary to discuss the distinction of dynamic cell-load characterization as compared to the approaches that model the cell load as simply the average number of associated users to a BS~\cite{singh2014joint}. For this we compare the downlink user throughput using the two approaches to study the difference.
First, in our system model, we calculate the dynamic cell load using the study of Bonald et.al.~\cite{bonald2003wireless}, say $\bar{\rho}$. Then use it to calculate the downlink user throughput $R_{dyn}$, which is given by~\cite{bonald2003wireless}:
\begin{align}
R_{dyn} = w \frac{1 - \bar{\rho}}{\bar{\rho}} \cdot A,
\end{align}
where, the area of the typical cell $A$ is approximated as $A = \frac{1}{\lambda}$, and the average number of active users in the dynamic traffic model~\cite{blaszczyszyn2014user} is:
$
N = \frac{\bar{\rho}}{1 - \bar{\rho}}.
$
To compare $R_{dyn}$ with that obtained using the analyses in the existing literature~\cite{7493676,singh2014joint}, we assume that the users are located homogeneously in each cell of the network following a PPP with an average of $N$ users per cell. Using this assumption, we carry out simulations to obtain the downlink user throughput $T_{PPP}$ as:
\begin{align}
R_{PPP} = \mathbb{E}_n \left[\frac{B}{n} \log_2\left(1 + SINR\right)\right],
\end{align}
where the expectation is taken with respect to the number of users in each cell ($n$), which is Poisson distributed with mean $N$, as well as the SINR of each user in each cell. The difference between $R_{dyn}$ and $R_{PPP}$ is highlighted in Fig. \ref{fig:N_user}. Even though the average number of users in both cases are same, as the space-time process with dynamic traffic evolves, the user distribution is no longer homogeneous in space which is not taken into account in existing studies.
\begin{figure}
\centering
\includegraphics[width = 7 cm, height =3.4 cm]{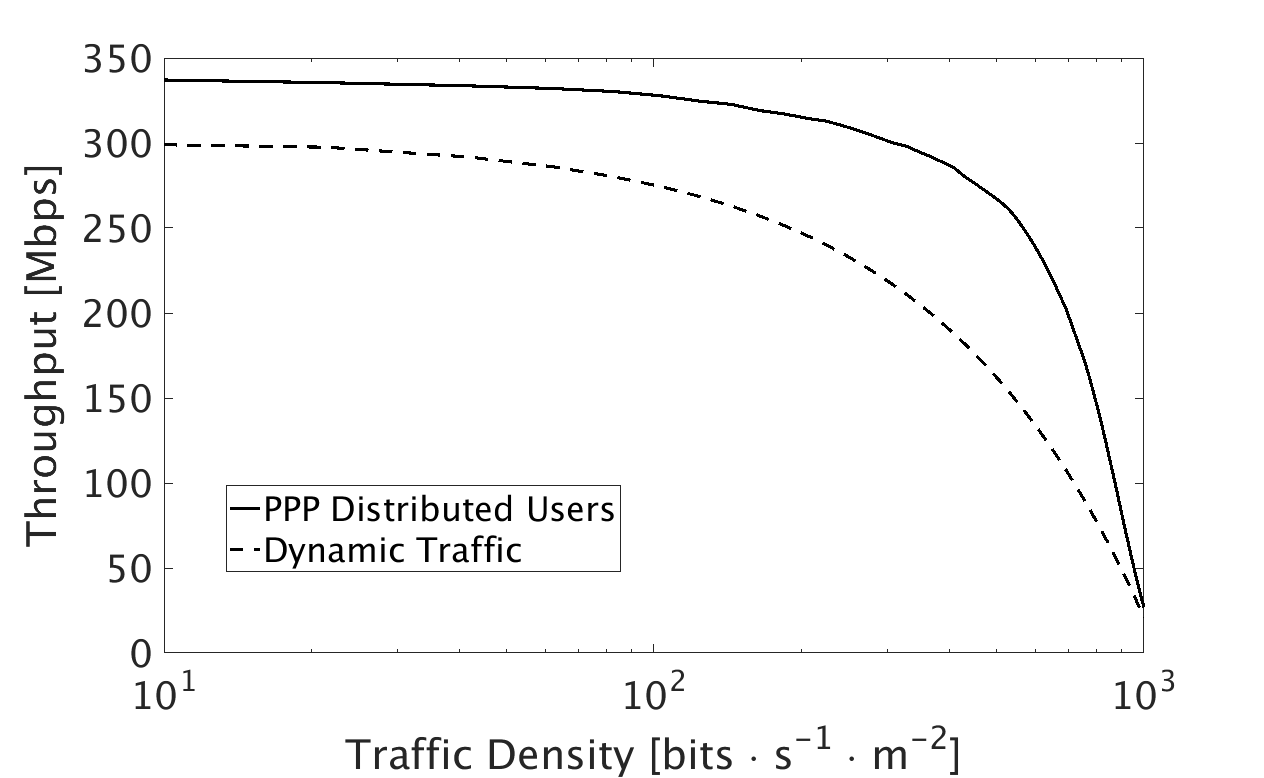}
\caption{Throughput comparison of mean cell approach with PPP distributed users.}
\label{fig:N_user}
\vspace*{-0.5cm}
\end{figure}
  \vspace*{-0.3cm}
\subsection{Preliminaries}
In case of single-tier random cellular networks, the cell of a \ac{BS} is given by the \ac{PV} partition of the space~\cite{chiu2013stochastic}. In the $\mathbb{R}^2$ plane, the \ac{PV} region of a \ac{BS} located at $x_0 \in \phi$ is:
$\mathcal{A} = \{y : ||y - x_0|| < ||y - x_i||; \forall x_i\in \phi\backslash \{x_0\} \}.$ 

The mosaic of the cells formed for all such $x_0$ from a PPP is called a \ac{PV} network.
To investigate the geometry-dependent characteristics of the cells (e.g., the cell load), in a stationary random \ac{PV} network, it is imperative to define the notion of the ‘average' cell. Thus, we recall the following definitions that provide a characterization of the average cell.

\begin{definition} The zero cell or the Crofton cell of the \ac{PV} network is defined as the cell containing a given fixed point in its interior~\cite{chiu2013stochastic}. 
\end{definition}
\begin{definition} The typical cell of a \ac{PV} network is defined as a cell selected at random within a large region of the network with equal chances for each cell to be picked~\cite{chiu2013stochastic}. Thereafter, the network is translated so that the center of the typical cell becomes the origin.
\end{definition}
The zero cell versus typical-cell approaches of modeling the network performance corresponds to the evaluation from the perspectives of the user and the network operator, respectively. Hence, for the case of analyzing the network load, the zero cell perspective is not an accurate way of characterization. In what follows, we first define the load of the typical-cell and then, discuss how the load of the zero cell is generally obtained. Then we present our analysis to characterize the load of the typical cell and hence the average network load.

\subsubsection{Average Load of the Typical Cell}
The load of the typical cell in the network can be calculated as:
\begin{align}
\label{TCload}
\rho = \int_\mathcal{A} \frac{w}{C(s)} ds,\end{align}
where $C(s)$ is the rate that a user located at $s$ receives in the typical cell $\mathcal{A}$, calculated using the Shannon formula. The random variable $\rho$ characterizes the load of the cell centered at $x_0$, and depends on the shape and size of $\mathcal{A}$.

The average load of the typical cell is then calculated by taking the expected values of loads for different realizations of the PPP itself: 
$
\bar{\rho} = \mathbb{E}[\rho],
$
\subsubsection{Mean Cell Approximation}
In case of \ac{PV} cells, the average load of the typical cell is generally difficult to evaluate because the shape and size of the typical cells is random.
 However, by using the ergodicity of the \ac{PPP}, the area of the typical cell can be approximated as
$\frac{1}{\lambda}
$ \cite{singh2014joint}.
Then, by assuming the network to be noise-limited, the average load can be approximated using the mean cell approach~\cite{blaszczyszyn2014user}, as: 
\begin{align}
{\bar{\rho}_{MC}} =  \int_{T} \frac{w}{B \lambda \log_2(1 + T)}p(T) dT, 
\end{align}
where the expectation is taken with respect to the \ac{SNR} ($T$) variations averaged over the fast fading, and $p(T) = \frac{-d\mathbb{P}_{C}(T)}{dT}$ is the \ac{pdf} of the \ac{SNR} of the typical user obtained by differentiating the \ac{SNR} coverage probability, $\mathbb{P}_{C}(T)$. Thus, the cell load can be calculated numerically, given the \ac{SNR} distribution.

However in the mean cell approach, as the expectation is taken with respect to the SNR variations of the typical user, it calculates the expected load of the zero cell which is statistically larger than the typical cell~\cite{chiu2013stochastic}. Thus, the mean cell approach, always overestimates the load of the typical cell.
In the next sections, we propose a new approximation, which is both more accurate and more tractable. First, we derive the \ac{CDF} of the cell load of the typical cell using the distribution of its area. Then, we obtain a single-integral-based and a closed-form approximation for the average load of the typical cell.
\vspace*{-0.6cm}
\subsection{Distribution of the Area of the Typical Cell}
The reduced area of a \ac{PV} cell is defined as~\cite{chiu2013stochastic} :
\begin{align}
s = A/\mathbb{E}[A],
\label{eq:redArea}
\end{align} 
where $A$ is the area of the typical cell, and $\mathbb{E}\left[\cdot\right]$ is the expectation operator. 
The \ac{pdf} of the reduced area of the typical \ac{PV} cell in two dimensions is given by \cite{tanemura2003statistical}:
\begin{align}
f_s(x) = \frac{343}{15}\sqrt{\frac{7}{2\pi}} x^{5/2}\exp\left(-\frac{7}{2}x\right)
\label{eq:pdf}.
\end{align}
Using this, we can obtain the \ac{CDF} of the area as given below.
\begin{lemma}
The \ac{CDF} of the area of the typical \ac{PV} cell for a PPP with intensity $\lambda$ is given by:
\begin{align}
F_A(x) =  \frac{343}{15}\sqrt{\frac{7}{2\pi}} \left(\frac{2}{7}\right)^{7/2}\gamma_{inc}\left(\frac{7\lambda x}{2},\frac{7}{2}\right),
\label{eq:CDF_Def}
\end{align}
where $\gamma_{inc}(\cdot)$ is the lower incomplete gamma function given by: $\gamma_{inc}(x,a) = \int_0^x t^{a-1}\exp(-t)dt$.
\end{lemma}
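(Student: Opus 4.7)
The plan is to reduce this to a direct change-of-variable calculation starting from the density in equation \eqref{eq:pdf}. For a homogeneous PPP of intensity $\lambda$, the mean area of the typical Poisson--Voronoi cell is $\mathbb{E}[A] = 1/\lambda$, so by the definition of the reduced area in \eqref{eq:redArea} we have $s = \lambda A$. Consequently,
\begin{align*}
F_A(x) = \mathbb{P}(A \le x) = \mathbb{P}(s \le \lambda x) = \int_0^{\lambda x} f_s(u)\, du,
\end{align*}
and substituting the explicit form of $f_s$ gives an integral of $u^{5/2}\exp(-\tfrac{7}{2}u)$ on $[0,\lambda x]$.

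Next, I would perform the substitution $t = \tfrac{7u}{2}$, so that $du = \tfrac{2}{7}\,dt$ and the upper limit becomes $\tfrac{7\lambda x}{2}$. The factor $u^{5/2}$ becomes $\bigl(\tfrac{2t}{7}\bigr)^{5/2}$, and combining with the Jacobian yields an overall prefactor of $\bigl(\tfrac{2}{7}\bigr)^{7/2}$ in front of $\int_0^{7\lambda x/2} t^{5/2}e^{-t}\,dt$. This last integral is, by the definition $\gamma_{\mathrm{inc}}(x,a)=\int_0^x t^{a-1}e^{-t}\,dt$, exactly $\gamma_{\mathrm{inc}}\!\bigl(\tfrac{7\lambda x}{2},\tfrac{7}{2}\bigr)$, matching the exponent $a-1=5/2$ with $a=7/2$. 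Pulling the constants in front reproduces the claimed expression \eqref{eq:CDF_Def}.

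There is no substantive obstacle here: the statement is a one-line corollary of the known reduced-area density in \eqref{eq:pdf} and the scaling $\mathbb{E}[A]=1/\lambda$ for a PPP. The only point to be careful about is correctly propagating the scaling factor $\lambda$ through both the upper limit of integration and through $\mathbb{E}[A]$, so as to obtain the argument $\tfrac{7\lambda x}{2}$ of the incomplete gamma function. The prefactor $\tfrac{343}{15}\sqrt{\tfrac{7}{2\pi}}$ is simply carried through unchanged from \eqref{eq:pdf}, and combines with the Jacobian-derived $\bigl(\tfrac{2}{7}\bigr)^{7/2}$ to produce the constant in \eqref{eq:CDF_Def}.
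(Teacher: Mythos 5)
Your proposal is correct and follows exactly the route the paper intends: the paper's proof is a one-line remark that the CDF follows from the reduced-area relation \eqref{eq:redArea} (with $\mathbb{E}[A]=1/\lambda$) and integration of \eqref{eq:pdf}, and your change of variables $t=\tfrac{7u}{2}$ correctly produces the prefactor $\bigl(\tfrac{2}{7}\bigr)^{7/2}$ and the argument $\gamma_{\mathrm{inc}}\bigl(\tfrac{7\lambda x}{2},\tfrac{7}{2}\bigr)$. No gaps; your writeup is simply a fully expanded version of the paper's sketch.
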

\begin{proof}
The \ac{CDF} can be easily derived using the relation \eqref{eq:redArea} and integrating \eqref{eq:pdf}.
\end{proof}
\vspace*{-0.4cm}
\subsection{Distribution of the Load of the Typical Cell}
For obtaining the \ac{CDF} of the load of the typical cell, we assume that the shape of \ac{PV} cells is circular. Although in a real \ac{PV} network, almost surely no circular cells occur, our results show that the circular assumption does not greatly deteriorate the derived approximation.
Accordingly, the load of a typical cell \eqref{TCload} with area A is approximated as~\cite{bonald2003wireless}:
\begin{align*}
\rho_{TC}(A) \approx \rho_{AP}(A) = \int_0^{2\pi}\int_0^{\sqrt{\frac{A}{\pi}}} \frac{wr }{B\lambda\log_2\left(1 + \xi r^{-2}\right)} dr d\theta. 
\end{align*} 
\begin{theorem}
\label{theo:CDF}
The distribution of the load of the typical cell, $\rho$ is given by:
\begin{align}
F_{\rho}(l) = F_A\left(\pi \left(\frac{1}{\xi}\exp\left(-\frac{\alpha}{2}\mbox{Ei}^{-1}\left(-\frac{l}{K'}\right)\right)\right)^{\frac{-2}{\alpha}}\right)
\end{align}
where $K' = \frac{ 4w\pi\ln(2)\xi}{\alpha^2B\lambda}\xi^{\frac{2}{\alpha}}$ and the symbol $\text{Ei}^{-1}(x)$ is the inverse of the exponential integral. For the special case of $\alpha = 2$, it is approximated as:
\begin{align}
F_{\rho}(l) \approx 
 F_A\left(\frac{\xi\pi\left(1+\exp\left(-\frac{l}{K_1}\right)\right)}{\exp\left(-\frac{l}{K_1}\right)}\right), 
\label{eq:distribution}
\end{align}
where $K_1 = \frac{ w\pi\ln(2)\xi}{B\lambda}$.
\end{theorem}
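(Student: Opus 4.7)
The strategy is to express the random load as a deterministic monotone function of the random cell area $A$, and then transport the distribution $F_A$ given in Lemma~1.

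\emph{Step 1 (reduction to an inverse).} I would first argue that the approximate load $\rho_{AP}(A)$ is a strictly increasing function of $A$: its integrand $\tfrac{wr}{B\lambda\log_2(1+\xi r^{-\alpha})}$ is positive, so enlarging the radius $\sqrt{A/\pi}$ of integration strictly enlarges the integral. By monotonicity,
\begin{align*}
F_{\rho}(l)=\Pr\!\bigl(\rho_{AP}(A)\le l\bigr)=\Pr\!\bigl(A\le\rho_{AP}^{-1}(l)\bigr)=F_A\!\bigl(\rho_{AP}^{-1}(l)\bigr),
\end{align*}
so the problem is reduced to computing $\rho_{AP}^{-1}$ in closed form.

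\emph{Step 2 (evaluating the integral).} I would invoke the noise-limited, high-SNR hypothesis to replace $\log_2(1+\xi r^{-\alpha})$ by $\log_2(\xi r^{-\alpha})=(\ln\xi-\alpha\ln r)/\ln 2$. After the substitution $u=r^{2}$ to absorb the leading $r$, followed by $t=\tfrac{2}{\alpha}(\ln\xi-\tfrac{\alpha}{2}\ln u)=\ln(\xi^{2/\alpha}/u)$, the inner integral collapses to the canonical form
\begin{align*}
\int_{x_{0}}^{\infty}\frac{e^{-t}}{t}\,dt \;=\; -\,\mbox{Ei}(-x_{0}),
\end{align*}
with lower limit $x_{0}=\ln\!\bigl(\xi^{2/\alpha}\pi/A\bigr)$. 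The multiplicative constants collected along the way produce precisely the prefactor $K'$ appearing in the statement.

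\emph{Step 3 (inversion and the general formula).} Setting $\rho_{AP}(A)=l$ gives $-\mbox{Ei}(-x_{0})=l/K'$, hence $x_{0}=-\mbox{Ei}^{-1}(-l/K')$. Solving $\ln(\xi^{2/\alpha}\pi/A)=-\mbox{Ei}^{-1}(-l/K')$ for $A$ and rewriting the exponent in the $(-2/\alpha)$-power form yields $\rho_{AP}^{-1}(l)$ exactly as displayed in the theorem. Plugging into Step~1 gives the first claim. For the special case $\alpha=2$, I would apply a closed-form approximation to $\mbox{Ei}^{-1}$ (equivalently to $\mbox{E}_1$) valid in the relevant range of $A<\xi\pi$, so that the implicit relation $l=K_{1}\,\mbox{E}_{1}(\ln(\xi\pi/A))$ can be solved explicitly, producing the expression $\xi\pi\bigl(1+e^{-l/K_{1}}\bigr)/e^{-l/K_{1}}$ inside $F_A$.

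\emph{Main obstacle.} The crux is Step~2: chaining the substitutions so that the integral lands exactly in the $\mbox{Ei}$ form, while carefully tracking the scalar $K'$. Step~3's general case is then a routine inversion, but the $\alpha=2$ reduction is delicate because $\mbox{E}_{1}$ has no elementary inverse: the approximation must be accurate in the noise-limited regime \emph{and} analytically invertible. A plain Taylor expansion of $\mbox{E}_{1}$ around either $0$ or $\infty$ is unlikely to reproduce the specific $(1+e^{-l/K_{1}})/e^{-l/K_{1}}$ shape, so I expect an asymptotic-matching or Pad\'e-type substitute for $\mbox{E}_{1}$ to be the nontrivial analytical choice.
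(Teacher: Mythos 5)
Your proposal follows essentially the same route as the paper: high-SNR approximation, substitution to reduce $\rho_{AP}(A)$ to $K'\text{E}_1\bigl(\tfrac{2}{\alpha}\ln(\xi(\pi/A)^{\alpha/2})\bigr)$, inversion via monotonicity to transport $F_A$, and a specialized approximation of $\text{Ei}^{-1}$ for $\alpha=2$. The "asymptotic-matching substitute" you anticipate for the last step is exactly what the paper uses, namely Pecina's piecewise approximation of $\text{Ei}^{-1}(x)$ in the regime $x\to 0^-$.
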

\begin{proof}

According to our assumption of high \ac{SNR} for dense networks, we can approximate $1 + \xi r^{-\alpha}$ as $\xi r^{-\alpha}$. Substituting $\ln\left(\xi r^{-\alpha}\right) = y$, we have:
\begin{align}
\rho_{AP}(A)  &= K' \int_{\ln\left(\xi\left(\pi/A\right)^{\frac{\alpha}{2}}\right)}^{\infty} \frac{\exp(-y)}{y} dy \nonumber \\
&= K'\text{E}_1\left(\frac{2}{\alpha}\ln\left(\xi\left(\frac{\pi}{A}\right)^{\frac{\alpha}{2}}\right)\right),
\label{eq:load_vs_A}
\end{align}
where, $K' = \frac{ 4w\pi\ln(2)\xi}{\alpha^2B\lambda}\xi^{\frac{2}{\alpha}}$ and $\text{E}_1(\cdot)$ is the exponential integral function \cite{barry2000approximation}. The \ac{CDF} is then simply obtained by some algebraic manipulations of the expression $\mathbb{P}(\rho_{AP}(A)\leq l)$.
For the special case of $\alpha = 2$, the \ac{CDF} of the approximated load of the typical cell $\rho_{AP}(A)$ is derived as:
\begin{align*}
\mathbb{P}(\rho_{AP}(A)\leq l)  
=\mathbb{P}\left(-\ln\left(\xi\pi A^{-1}\right) \leq \text{Ei}^{-1}\left(-\frac{l}{K_1}\right)\right),
\end{align*}
where, the symbol $\text{Ei}^{-1}(x)$ is given by $\text{Ei}(x) = -\text{E}_1(-x)$. Although an explicit expression for $\text{Ei}^{-1}(.)$ does not exist, Pecina \cite{pecina1986function} provided piece-wise functions to approximate $\text{Ei}^{-1}(x)$ for different ranges of $x$. 
The asymptotic approximation for $\frac{-1}{K_1} \to 0$ is~\cite{pecina1986function}:
\begin{align*}
\text{Ei}^{-1}\left(-\frac{l}{K_1}\right)
 \approx \frac{\exp\left(-\frac{l}{K_1}\right)}{1+\exp\left(-\frac{l}{K_1}\right)}.
 \end{align*}
 
 \begin{figure*}[!h]
 \small
\begin{align}
\bar{\rho}'_{AP} &= \int_0^\infty \lambda\frac{2\pi\ln(2)\xi}{B} F_1(A) \frac{343}{15}\sqrt{\frac{7}{2\pi}}\left(A\lambda\right)^{5/2}\exp\left(-\frac{7}{2}A\lambda\right) dA \tag{12}
\label{Eq:Final_Mean}
\\
\mbox{where,} \hspace{1cm} F_1(A) &= \begin{cases}
\frac{\exp\left(-\ln\left(\frac{\xi\pi}{A}\right)\right)\ln\left(\frac{G_0}{\ln\left(\frac{\xi\pi}{A}\right)} + G_0 + (1-G_0)\beta\left(\ln\left(\frac{\xi\pi}{A}\right)\right)\right)}{G_0 + (1-G_0)\exp\left(\frac{-\ln\left(\frac{\xi\pi}{A}\right)}{1-G_0}\right)}; \qquad &{A \leq \frac{\pi\xi}{\exp(1)}} \nonumber \\
-\gamma - \ln\left(-\ln \left(\frac{\xi\pi}{A}\right)\right) +\left(-\ln\left(\frac{\xi\pi}{A}\right)\right) - \frac{\left(-\ln\left(\frac{\xi\pi}{A}\right)\right)^2}{8}; \qquad &{A > \frac{\pi\xi}{\exp(1)}}
\end{cases}
\end{align}
\hrulefill
\vspace*{-0.4cm}
\end{figure*}
 
 In our analysis, we assume $P_t = 30$ dBm, a noise density of -174 dBm/Hz, and $B = 1$ GHz. The path loss coefficient $K$ is derived from the Umi model for data transmission~\cite{38.900}. As the load $l$ varies from $0 \leq l \leq 1$, for $G_0 = 20$ dB, and $\lambda =  1e-5$ m$^{-2}$, we have $0\leq \frac{l}{K_1} \leq 1e-8$. Thus, our asymptotic approximation is valid. Using this approximation completes the proof.
\vspace*{-0.3cm}
\end{proof}
\vspace*{-0.25cm}
\subsection{Proposed Approximations for the Average Load of the Typical Cell}
Using the distribution of \eqref{eq:CDF_Def} and \eqref{eq:load_vs_A} we obtain the following approximation of the average load:
\begin{align}
\bar{\rho}_{AP} = \int_0^\infty  K'\text{E}_1\left(\frac{2}{\alpha}\ln\left(\xi\left(\frac{\pi}{A}\right)^{\frac{\alpha}{2}}\right)\right)f_A(A) dA \label{eq:MEAN_expr},
\end{align}
where $f_A(x) = \frac{dF_A(x)}{dx}$.
Since solving \eqref{eq:MEAN_expr} is tedious, we provide two results to approximate the average load of the typical cell for the special case of $\alpha = 2$\footnote{For dense deployments, the serving \ac{BS} is generally in \ac{LOS} which has a path-loss exponent close to 2 for sub-6GHz~\cite{36.814} and mm-wave~\cite{38.900} transmissions.}. In Section~\ref{sec:SR}, we will highlight the advantage of each approximation.
\subsubsection{\ac{EI} based Approximation}
\begin{theorem}
\label{theo:app1}
The average load of the typical cell can be approximated as (12).
\end{theorem}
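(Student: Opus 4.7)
The plan is to start from the integral representation of the average load in (11), specialize it to $\alpha = 2$, and substitute the explicit Gamma-type pdf of the typical cell area obtained from Lemma 1 by rescaling the reduced-area density (6). After this substitution the load average becomes a single integral in $A$ whose outer factor is the Gamma-type density $f_A$ and whose inner factor is the exponential integral $\text{E}_1(\ln(\xi\pi/A))$; the goal is simply to replace that inner factor by a tractable closed-form expression in $A$, since no simplification is possible against $f_A$ itself.

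For this I would use the piecewise rational/series approximation of $\text{E}_1$ due to Barry et al.\ \cite{barry2000approximation}, which comes in two forms. The first, an exponential-over-rational expression with a free tuning constant, is accurate when the argument of $\text{E}_1$ is bounded away from $0$ and positive, i.e.\ when $\ln(\xi\pi/A) \geq 1$, equivalently $A \leq \xi\pi/e$. The second, obtained by truncating the well-known series $\text{E}_1(x) = -\gamma - \ln(x) + x - x^2/(2\cdot 2!) + \dots$, is accurate when the argument is near zero, i.e.\ when $A > \xi\pi/e$. I would therefore split the domain of integration at $A = \xi\pi/e$, invoke the appropriate Barry approximation on each subinterval, and collect the two resulting pieces into the single kernel $F_1(A)$ that appears in (12). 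The overall coefficient $\lambda\tfrac{2\pi\ln(2)\xi}{B}$ then follows by pulling $K'$ (specialized to $\alpha=2$) out of the integrand and merging it with the scale factors from $f_A$.

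The main obstacle will be bookkeeping around the piecewise switch. In the near-zero branch, $-\ln(\xi\pi/A)$ must be positive for the outer $\ln(-\ln(\xi\pi/A))$ in $F_1(A)$ to be defined, which genuinely requires $A>\xi\pi$, not merely $A>\xi\pi/e$, so one has to verify that the series-branch approximation remains accurate throughout the intermediate regime $\xi\pi/e<A\leq\xi\pi$ where the two forms interpolate. A secondary annoyance is purely notational: Barry's rational approximation uses a free fitting constant that is also written $G_0$, colliding with the antenna-gain product $G_0$ defined in the system model; in (12) this $G_0$ should be read as the Barry fitting parameter, not the physical gain. Once these two points are handled, the statement follows by direct substitution, and the accuracy of the resulting approximation is then deferred to the numerical validation in Section \ref{sec:SR}.
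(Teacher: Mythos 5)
Your proposal follows essentially the same route as the paper: substitute the Gamma-type area density into \eqref{eq:MEAN_expr}, split the integration domain at $A = \xi\pi/\exp(1)$, apply Barry's rational approximation of $\text{E}_1$ where its argument is at least $1$ and the series expansion otherwise, and collect the two branches into $F_1(A)$. The two caveats you flag --- that the $G_0$ appearing in \eqref{Eq:Final_Mean} is Barry's fitting constant $\exp(-\gamma)$ (denoted $K_2$ in the paper's own \eqref{Eq:E1}) rather than the antenna gain, and that the $\ln\left(-\ln\left(\xi\pi/A\right)\right)$ term of the second branch is only real-valued for $A > \xi\pi$ --- are genuine infelicities in the paper's presentation that its proof does not address, so your reading is the correct one.
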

\begin{proof}
\stepcounter{equation}
We rely on an approximation of the exponential integral provided by Barry et. al. \cite{barry2000approximation}:
\begin{align}
\text{E}_1 (x) = \frac{\exp(-x)\ln\left(\frac{K_2}{x} + K_2 + (1-K_2)\beta(x)\right)}{K_2 + (1-K_2)\exp\left(\frac{-x}{1-K_2}\right)},
\label{Eq:E1}
\end{align}
where, $K_2 = \exp(-\gamma) = 0.56$, $\beta(x) = 1 - \frac{1}{(h(x) + bx)^2},$ 
\begin{flalign}
h(x) = \frac{1}{1 + x\sqrt{x}} + \frac{0.46 x^{\sqrt{\frac{31}{26}}}}{1 + 0.43 x^{\sqrt{\frac{31}{26}}}}, b \approx 1.04207 \nonumber,
\end{flalign}
and $\gamma$ is the Euler's constant. This interpolated version of the exponential integral provides a good approximation for $1 \leq x \leq 50$. This corresponds to the range \begin{align}
\frac{\pi\xi}{\exp(50)} \leq  A \leq \frac{\pi\xi}{\exp(1)}
\label{eq:A_BOUND}.
\end{align}
For the region of $A$ greater than this range, we use the asymptotic expansion of $E_1(x)$ as:
\begin{align}
E_1(x) = -\gamma - \ln(x) + x - \frac{x^2}{8} + ...
\label{eq:ASYMP}
\end{align}
For practical ranges of cell sizes, the lower bound in \eqref{eq:A_BOUND} always holds (e.g., with $G_0 = 0$ dB, the lower bound on the area is $A \geq 2e-8$).
Now, substituting \eqref{eq:load_vs_A} in \eqref{eq:MEAN_expr} and using \eqref{Eq:E1} and \eqref{eq:ASYMP} to evaluate the integral, completes the proof.
\end{proof}

\subsubsection{closed-form (CF) Approximation}
Substituting $\ln\left(\frac{\xi\pi}{A}\right) = t$, the approximated average load of the typical cell \eqref{eq:MEAN_expr} becomes:
\begin{align}
\bar{\rho}_{AP} = \chi_2 \int_{-\infty}^{\infty} E_1(t) & \exp\left(-\frac{7}{2}t\right)F_2(t) dt,
\label{eq:mean_approx}
\end{align}
where $\chi_2 = \frac{w \pi\ln(2)\xi}{\lambda B}(\xi\pi\lambda)^{3.5}$ and $F_2(t) = \exp\left(-\frac{7\lambda \xi \pi}{2} \exp(-t)\right)$.

Now, a closed-form solution to this integral does not exist. However, in what follows, we derive an approximate closed-form solution for the average load of the typical cell, which we show to be very accurate in Section~\ref{sec:SR}. 
\begin{theorem}
The average load of the typical cell is approximated by the closed-form expression:
\label{theo:Closed}
\begin{align}
 \bar{\rho}''_{AP} = \chi_2 & \left( I_1(t_2) - I_1(t_1) + (y_1-1)I_2(t_2) - y_1I_2(t_1) \right) \nonumber
\end{align}
where,\\
$I_1(x) = \left(\frac{2}{7}\right)^{2} \left(E_1(4.5x) - (1+3.5x)e^{-3.5x}E_1(x)
+\right.$ $\left.\left(\frac{7}{9}\right)\exp(-4.5x) \right)$  
$ I_2(x) = \frac{2}{7}\left[ E_1\left(4.5x\right) - e^{-3.5x}E_1(x)\right]$, and, $t_1 = -\ln\left(-\frac{2}{7\lambda\xi\pi}\ln\left(0.1\right)\right),  
t_2 = -\ln\left(-\frac{2}{7\lambda\xi\pi}\ln\left(0.9\right)\right) $ and, $y_1 =  \frac{0.9 t_1-0.1 t_2}{t_1-t_2}.$
\end{theorem}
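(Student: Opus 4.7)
The integrand of $\bar{\rho}_{AP}$ in \eqref{eq:mean_approx} is the product of the analytic kernel $E_1(t)e^{-7t/2}$ and the sigmoid-like weight $F_2(t)=\exp\bigl(-\tfrac{7\lambda\xi\pi}{2}e^{-t}\bigr)$, which rises monotonically from $0$ to $1$. The plan is to replace $F_2$ by a three-piece ramp envelope that is $0$ to the left of a lower threshold $t_1$, affine on $[t_1,t_2]$, and $1$ to the right of an upper threshold $t_2$. Because the transition of $F_2$ is sharp for the parameter regime considered in the paper, this ramp approximation is expected to introduce only small error.

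First, I would fix $t_1$ and $t_2$ by imposing $F_2(t_1)=0.1$ and $F_2(t_2)=0.9$; inverting $F_2$ in closed form reproduces the stated expressions. The affine piece on $[t_1,t_2]$ is then pinned by linear interpolation through $(t_1,0.1)$ and $(t_2,0.9)$, yielding intercept $y_1=(0.9\,t_1-0.1\,t_2)/(t_1-t_2)$ as given. Substituting the ramp into \eqref{eq:mean_approx}, the integral decomposes cleanly into the linear stretch from $t_1$ to $t_2$ and the constant-$1$ tail from $t_2$ to $\infty$, with the $(-\infty,t_1)$ contribution discarded as exponentially small.

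Next, I would evaluate the two primitives in closed form. For $I_2(x)=\int E_1(t)e^{-7t/2}dt\bigr|_x$, a single integration by parts with $u=E_1(t)$, $dv=e^{-7t/2}dt$ uses $E_1'(t)=-e^{-t}/t$ together with the identity $\int \tfrac{e^{-9s/2}}{s}ds=-E_1(9s/2)$ to obtain $I_2(x)=\tfrac{2}{7}\bigl[E_1(9x/2)-e^{-7x/2}E_1(x)\bigr]$, which vanishes as $x\to\infty$ since both $E_1$ terms decay. For $I_1(x)=\int t\,E_1(t)e^{-7t/2}dt\bigr|_x$, either two successive integrations by parts or direct differentiation of the proposed ansatz verifies $I_1'(x)=xE_1(x)e^{-7x/2}$; the cleanest route is the latter, and the key algebraic check is that the $e^{-9x/2}/x$ terms arising from the three summands cancel, leaving exactly $xE_1(x)e^{-7x/2}$.

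Gathering the three contributions and using $I_2(\infty)=0$ yields the displayed closed form in the theorem. The main obstacle is not the antiderivative bookkeeping but rather the justification of the ramp approximation: I would have to argue that $F_2$ lies close to $0$ on $(-\infty,t_1)$, close to $1$ on $(t_2,\infty)$, and close to its linear interpolant on $[t_1,t_2]$, and that the deviations are further damped by the rapid decay of $E_1(t)e^{-7t/2}$ in the weighted integrand. Controlling this error sharpness---in particular validating that the $10\%$ and $90\%$ cutoffs suffice under the noise-limited, dense-deployment regime---is what ultimately makes the closed-form expression a genuine approximation rather than merely a convenient ansatz.
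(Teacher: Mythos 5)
Your proposal is correct and follows essentially the same route as the paper: approximate $F_2(t)$ by the three-piece ramp with breakpoints at the 10\% and 90\% levels, split the integral into the affine stretch on $[t_1,t_2]$ and the constant tail on $[t_2,\infty)$, and evaluate the resulting integrals of $E_1(t)e^{-3.5t}$ and $tE_1(t)e^{-3.5t}$ in closed form. The only (harmless) difference is that you derive the antiderivatives $I_1$ and $I_2$ by integration by parts and verification, whereas the paper simply cites the tabulated results of Geller and Ng.
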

\begin{proof}
We can approximate $F_2(t)$, with a piece-wise defined ramp and step function as follows:
\begin{align}
\tilde{F}_2(t) = \begin{cases}
0 ; \qquad &t \leq t_1, \\
\frac{0.8t}{t_2-t_1} + y_1; &t_1 < t \leq t_2, \qquad \\
1 ;  \qquad & t > t_2 
\end{cases}
\label{eq:approx_F2}
\end{align}
where $t_1$ and $t_2$ are the points corresponding to 10 and 90 percentile values of $F_2(t)$, and $y_1$ is the intercept. 
In the next section, we will show that this approximation provides accurate results for the average cell load.
With the approximation of \eqref{eq:approx_F2} in \eqref{eq:mean_approx}, we have:
\begin{align}
\bar{\rho}_{AP} = \chi_2 \left(\int_{t_1}^{t_2}E_1(x)e^{3.5x}\left(\frac{0.8}{t_2 - t_1}x + y_1\right) dx +  \right. \nonumber \\\left.\int_{t_2}^{\infty}E_1(x)e^{-3.5x} dx\right) \nonumber
\end{align}
Geller and Ng \cite{geller1969table} provided closed-form expressions for both of the above integral types, which we employ to obtain the closed-form for the average cell load.
\end{proof}
\vspace*{-0.45cm}
\section{Simulation Results}
\label{sec:SR}
\subsection{\ac{CDF} of the Load and Stable Fraction of the Network}
To validate the approximation of the \ac{CDF} of the cell load, we compute the stable fraction of the network, which is defined as the fraction of non-overloaded cells. Mathematically, this is the probability that the load of the typical cell is less than 1.
In Fig. \ref{fig:STABLE_FRAC} we compare the stable fraction of the network for a file size of $\sigma = 100$ Mb, and a user arrival rate of $\lambda_U = 100$ km$^{-2}$, obtained with the approximation of the \ac{CDF} derived in Theorem \ref{theo:CDF} and the one computed from Monte-Carlo simulations of the PPP.
This provides dimensioning rules for the operator in terms of the minimum deployment density of \ac{BS}s required to achieve a given stable fraction. For example, with a directive antenna gain of $G_0 = 20$ dB and for a load target of 0.5, the operator must deploy at least $50$ BSs km$^{-2}$.
We also observe that the closed-form \ac{CDF} provides a good approximation of the numerical values, specially for a larger antenna gain $(G_0 = 20 \text{ dB})$. Accordingly, the circular assumption of the cell shape is not detrimental for evaluating the performance of the network. 
\begin{figure} 
\centering
\includegraphics[width=7cm,height = 3.4cm]{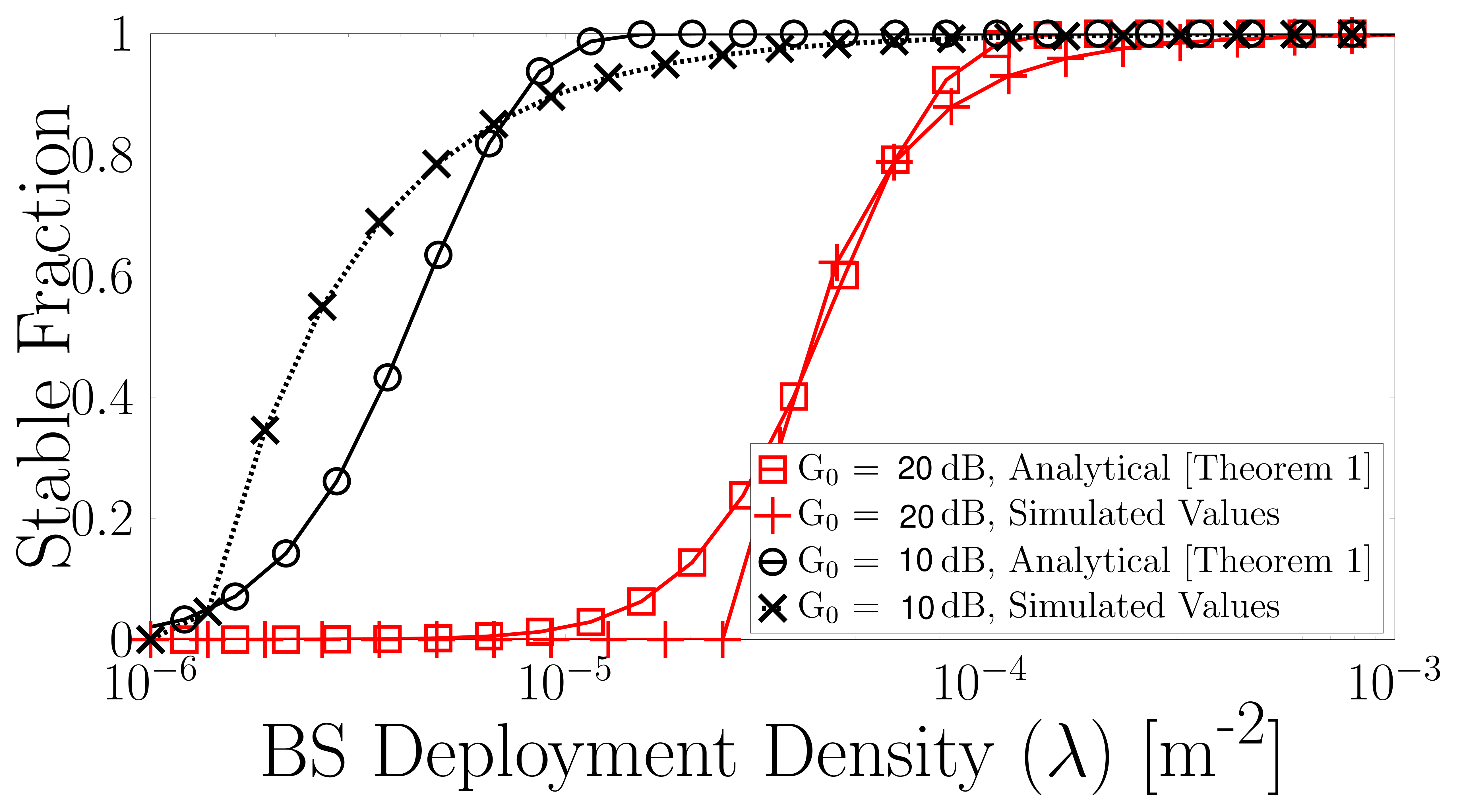}
\caption{Stable fraction of the network.}
\label{fig:STABLE_FRAC}
\vspace*{-0.6cm}
\end{figure}
\vspace*{-0.5cm}
\subsection{Accuracy of the EI Approximation of the Network Load}
In Fig. \ref{fig:MEAN_MIN}, we compare the average load of the typical cell, computed with the EI approximation (Theorem \ref{theo:app1}), the CF expression (Theorem \ref{theo:Closed}), and that obtained using the mean cell approach with the network load calculated using Monte-Carlo simulations. For the Monte-Carlo simulations, we find the average cell load in one realization of the PPP $\phi$, for a given $\lambda$, $\lambda_U$, and $\sigma$ and then we perform the same calculations and average over 1000 PPP realizations.
As seen in the figure, the EI approximation provides a more accurate characterization of the network load than the mean cell approach. The mean cell approach always overestimates the actual load, because, the zero cell is, on average, larger than the typical cell, resulting in higher load. Therefore, from the perspective of an operator, we provide a more realistic, and hence reliable method to characterize the network load and to dimension the network. As an example, for $\sigma = 100$ Mb, and $\lambda_U = 0.01$ users per second, the EI approximation accurately estimates that the operator must deploy 10 \ac{BS} less (120 as compared to 130) than that prescribed by the mean cell approach. 
\vspace*{-0.45cm}
\subsection{Advantages of the CF Approximation of the Network Load}
As we see in Fig.~\ref{fig:MEAN_MIN}, the CF approximation provides the loosest approximation to the network load; however, it does not require numerical evaluation of integrals. Moreover, we see that for higher file sizes $(\sigma = 100\text{ Mb})$ and denser deployment of small cells $(\lambda \geq 1e-4$ m$^{-2})$, even the CF approximation provides an excellent approximation of the network load. 

From a practical perspective, it provides a fast method of accurately estimating the network load without the need of running extensive simulations, which can become infeasible. Particularly, recall that \ac{BS} locations are Poisson distributed. For every realization of \ac{BS} locations, SNR distribution should be computed by drawing all required random variables. Moreover, a dynamic traffic of users arriving in the system, downloading a file, and leaving should be simulated for a sufficient duration to reach the mixing time of the Markov process. This procedure should be repeated for every set of possible parameters if we want to characterize the network load. At last cell overloading may be undetectable as any simulation has a finite duration. For all these reasons, the analytical model presented in this analysis is necessary to provide very quick results and interesting insights to the system.
\begin{figure}[t]
\centering
\includegraphics[width=7cm,height = 3.5cm]{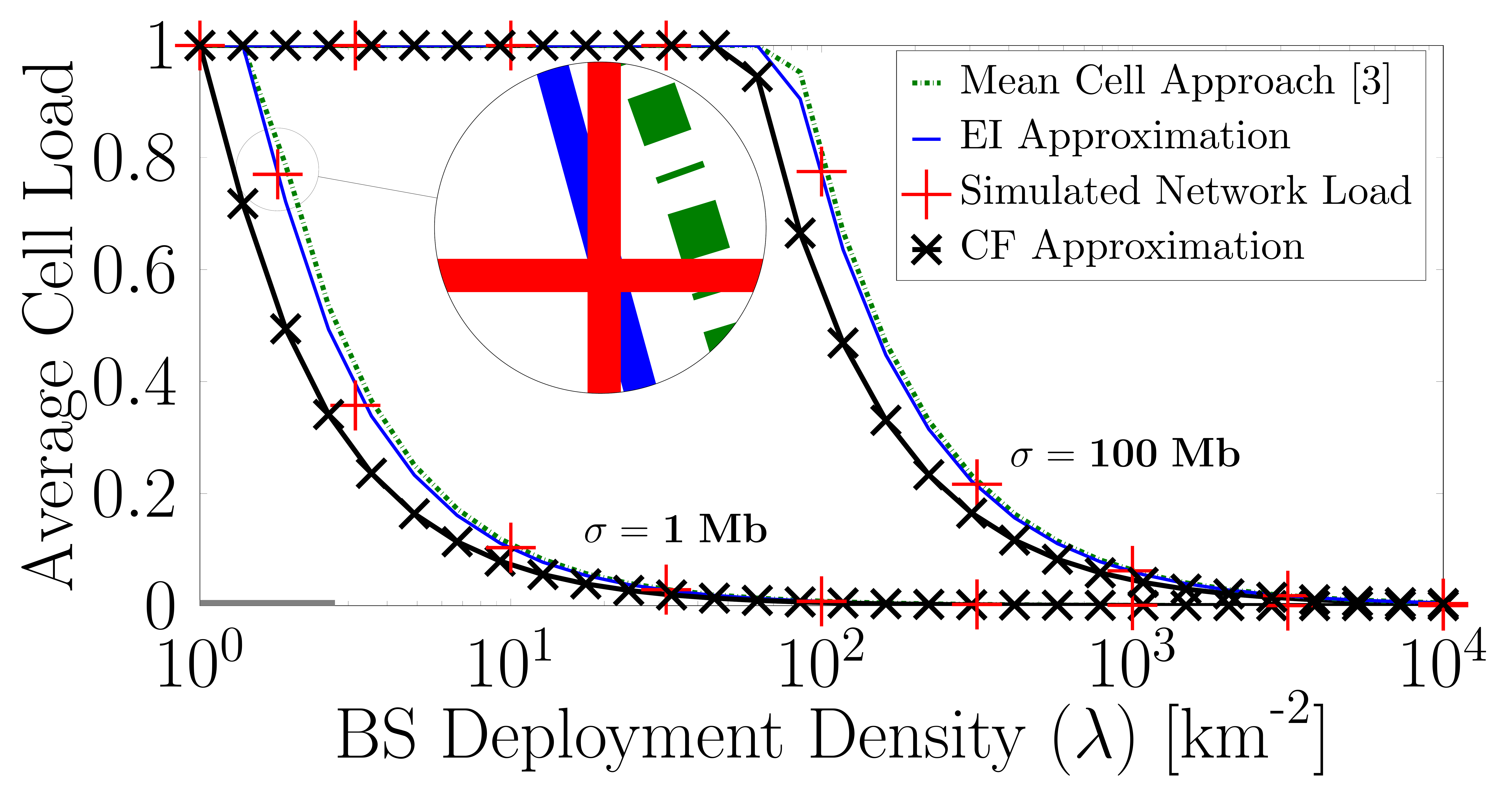}
\caption{Analytical approximation accuracy, $G_0 = 36$ dB.}
\label{fig:MEAN_MIN}
\vspace*{-0.5cm}
\end{figure}
\section{Conclusion}
\label{sec:C}
The realistic assessment of the mobile network performance need to take dynamic traffic into account in order to characterize the network load, which is still an open problem. Towards this end, we have derived a simple approximation for the \ac{CDF} of the cell load of the typical cell in a noise-limited network, which is characterized by high \ac{SINR} and low inter-cell interference. Furthermore, we have obtained a single approximation-based expression and a closed-form expression for the average load of the network by using the distribution of the area of the typical cell. Our derivations present a more realistic characterization of the cell load, as compared to the recently introduced mean cell approach, since we consider the typical cell of the network rather than the zero cell. The analysis provides a tractable and accurate characterization for the cell load that can be utilized, e.g., for evaluating the user throughput and dimensioning 5G networks. However, accurate characterization of the dynamic network load in case of an interference prone network is not straightforward. This will be addressed in a future work.
\bibliography{refer.bib}
\bibliographystyle{IEEEtran}

\begin{acronym}
	\acro{CDF}{cumulative density function}
    \acro{pdf}{probability density function}
    \acro{PPP}{Poisson point process}
    \acro{BS}{base stations}
    \acro{LOS}{line of sight}
    \acro{EI}{Exponential Integral}
    \acro{SINR}{signal to interference plus noise ratio}
    \acro{SNR}{signal to noise ratio}
    \acro{mm-wave}{millimeter wave}
    \acro{PV}{Poisson-Voronoi}
\end{acronym}

\end{document}